%%
%% sample document for AAMAS'19 conference
%%
%% modified from sample-sigconf.tex
%%
%% see ACM instructions acmguide.pdf
%%
%% AAMAS-specific questions? F.A.Oliehoek@tudelft.nl
%%

\documentclass[sigconf]{aamas}  % do not change this line!

%% your usepackages here, for example:
% \usepackage{booktabs}

%% do not change the following lines
\usepackage{flushend}
%\setcopyright{ifaamas}  % do not change this line!
%\acmDOI{doi}  % do not change this line!
%\acmISBN{}  % do not change this line!
%\acmConference[AAMAS'20]{Proc.\@ of the 19th International Conference on Autonomous Agents and Multiagent Systems (AAMAS 2020), B.~An, N.~Yorke-Smith, A.~El~Fallah~Seghrouchni, G.~Sukthankar (eds.)}{May 2020}{Auckland, New Zealand}  % do not change this line!
%\acmYear{2020}  % do not change this line!
%\copyrightyear{2020}  % do not change this line!
%\acmPrice{}  % do not change this line!

\settopmatter{printacmref=false} % Removes citation information below abstract
\renewcommand\footnotetextcopyrightpermission[1]{} % removes footnote with conference information in first column
\pagestyle{plain} % removes running headers

\usepackage{amsthm,amsmath,amssymb}
\usepackage{amsfonts}
\usepackage{tabularx}
\usepackage{arydshln}
\usepackage{booktabs}
 \usepackage[ruled,algonl,vlined]{algorithm2e}
\usepackage{csquotes}
\usepackage{graphicx}  
\usepackage{helvet} 
\usepackage{tikz}
\usepackage[textwidth=41mm,textsize=small]{todonotes}
\usepackage{url}  
\usepackage{wrapfig}
\usepackage{xspace}

\usepackage{eqparbox}

 \newtheorem{Lemma}{Lemma}
 \newtheorem{Definition}{Definition}
 \newtheorem{Proposition}{Proposition}

\newcommand{\haris}[1]{\textcolor{red}{\textbf{Haris says:} #1}}

 \usepackage{adjustbox}  
 \newsavebox\CBox
 \def\textBF#1{\sbox\CBox{#1}\resizebox{\wd\CBox}{\ht\CBox}{\textbf{#1}}}
 \newcommand{\boldm}[1] {\mathversion{bold}#1\mathversion{normal}}
 \usepackage{tikz}

	\newcommand{\sPref}[1][]{                  
		\ifthenelse{\equal{#1}{}}{\mathrel \succ}{\mathop{P_{#1}}}
	}                                          
	\newcommand{\Indiff}[1][]{                 
		\ifthenelse{\equal{#1}{}}{\mathrel \sim}{\mathop{\sim_{#1}}}
	}
	\newcommand{\prefset}[1][]{\ifthenelse{\equal{#1}{}}{\mathcal{R}}{\mathcal{R}_{#1}}}

\usepackage{mathrsfs}
\usepackage{paralist}

\usepackage{diagbox}
\usepackage{boxedminipage}

% \setcopyright{ifaamas}  % do not change this line!
% \acmDOI{doi}  % do not change this line!
% \acmISBN{}  % do not change this line!
% \acmConference[AAMAS'19]{Proc.\@ of the 18th International Conference on Autonomous Agents and Multiagent Systems (AAMAS 2019), N.~Agmon, M.~E.~Taylor, E.~Elkind, M.~Veloso (eds.)}{May 2019}{Montreal, Canada}  % do not change this line!
% \acmYear{2019}  % do not change this line!
% \copyrightyear{2019}  % do not change this line!
% \acmPrice{}  % do not change this line!
%

%%%%%%%%%%%%%%%%%%%%%%%%%%%%%%%%%%%%%%%%%%%%%%%%%%%%%%%%%%%%%%%%%%%%%%%%%%%%%%%%%%%%%%%%%%%%%%%%%%%%

\begin{document}
	\title{Committee Selection using Attribute Approvals}
%\titlenote{Produces the permission block, and copyright information}

% AAMAS: as appropriate, uncomment one subtitle line; check the CFP
%\subtitle{Extended Abstract}
%\subtitle{Industrial Applications Track}
%\subtitle{Socially Interactive Agents Track}
%\subtitle{Blue Sky Ideas Track}
%\subtitle{Engineering Multiagent Systems Track}
%\subtitle{Robotics Track}
%\subtitle{JAAMAS Track}
%\subtitle{Doctoral Mentoring Program}

%\subtitlenote{The full version of the author's guide is available as \texttt{acmart.pdf} document}

% AAMAS: submissions are anonymous for most tracks
%\author{Paper \#39}  % put your paper number here!

 \author{Venkateswara Rao Kagita}
 \affiliation{%
  \institution{National Institute of Technology}
   %streetaddress{P.O. Box 1212}
   \city{Telangana}
   \state{India}
   %\postcode{43017-6221}
 }
 \email{585venkat@gmail.com}
 \author{Arun K Pujari}
\affiliation{%
	\institution{Central University of Rajasthan}
	%streetaddress{P.O. Box 1212}
	\city{Rajasthan}
	\state{India}
	%\postcode{43017-6221}
}
\email{akpujari@curaj.ac.in}

\author{Vineet Padmanabhan}
\affiliation{%
	\institution{University of Hyderabad}
	%streetaddress{P.O. Box 1212}
	\city{Andhra Pradesh}
	\state{India}
	%\postcode{43017-6221}
}
\email{vineetcs@uohyd.ernet.in} 

\author{Vikas Kumar}
\affiliation{%
		\institution{Central University of Rajasthan}
	%streetaddress{P.O. Box 1212}
	\city{Rajasthan}
	\state{India}
	%\postcode{43017-6221}
}
\email{vikas007bca@gmail.com}

\begin{abstract}
We consider the problem of committee selection from a fixed set of candidates where each candidate has multiple quantifiable attributes.
To select the best possible committee, instead of voting for a candidate, a voter is allowed to approve the preferred attributes of a given candidate.
Though attribute based preference is addressed in several contexts, committee selection problem with attribute approval of voters has not been attempted earlier.
A committee formed on attribute preferences is more likely to be a better representative of the qualities desired by the voters and is less likely to be susceptible to collusion or manipulation.
In this work, we provide a formal study of the different aspects of this problem and define properties of weak unanimity, strong unanimity, simple justified representations and compound justified representation, that are required to be satisfied by the selected committee.
We show that none of the existing vote/approval aggregation rules satisfy these new properties for attribute aggregation.
We describe a greedy approach for attribute aggregation that satisfies the first three properties, but not the fourth, i.e., compound justified representation, which we prove to be NP-complete.
Furthermore, we prove that finding a committee with justified representation and the highest approval voting score is NP-complete.
\end{abstract}

% AAMAS: the ACM CCS are not needed within AAMAS papers
%%
%% The code below should be generated by the tool at
%% http://dl.acm.org/ccs.cfm
%% Please copy and paste the code instead of the example below. 
%%
%\begin{CCSXML}
%<ccs2012>
% <concept>
%  <concept_id>10010520.10010553.10010562</concept_id>
%  <concept_desc>Computer systems organization~Embedded systems</concept_desc>
%  <concept_significance>500</concept_significance>
% </concept>
% <concept>
%  <concept_id>10010520.10010575.10010755</concept_id>
%  <concept_desc>Computer systems organization~Redundancy</concept_desc>
%  <concept_significance>300</concept_significance>
% </concept>
% <concept>
%  <concept_id>10010520.10010553.10010554</concept_id>
%  <concept_desc>Computer systems organization~Robotics</concept_desc>
%  <concept_significance>100</concept_significance>
% </concept>
% <concept>
%  <concept_id>10003033.10003083.10003095</concept_id>
%  <concept_desc>Networks~Network reliability</concept_desc>
%  <concept_significance>100</concept_significance>
% </concept>
%</ccs2012>  
%\end{CCSXML}
%
%\ccsdesc[500]{Computer systems organization~Embedded systems}
%\ccsdesc[300]{Computer systems organization~Redundancy}
%\ccsdesc{Computer systems organization~Robotics}
%\ccsdesc[100]{Networks~Network reliability}

%\author{Paper ID: 900}

\keywords{participatory budgeting, proportional representation, multi-winner voting}  % put your semicolon-separated keywords here!

\maketitle

%%%%%%%%%%%%%%%%%%%%%%%%%%%%%%%%%%%%%%%%%%%%%%%%%%%%%%%%%%%%%%%%%%%%%%%%%%%%%%%%%%%%%%%%%%%%%%%%%%%%%%%%%
%% start of main body of paper

\section{Introduction}\label{sec:intro}
Committee selection, from a set of candidates, by aggregating voters' preferences is a fundamental problem of social choice theory and has recently received considerable attention from the AI community~\cite{aziz2015,aziz2015_1,schlotter2011,skowron2015}. 
Generally, each candidate possesses a set of quantifiable attributes that make a candidate suitable or otherwise.
We consider the committee selection problem where the voter prefers a candidate because of some of the attributes possessed by the candidate, or more explicitly stated, the voter approves \emph{only} these attributes of the candidate.

Attribute-level preferences exist in various domains such as, food \cite{meuwissen2007,hadi2013}, %,baker1999}, 
health care \cite{abiiro2014}, %~\cite{petrou2009women,abiiro2014},
housing \cite{collen2001}, farming \cite{kragt2012}, airline
services \cite{de2008}, technology product markets \cite{sriram2006}, job \cite{konrad2000}, e-
transactions \cite{butler2008}, and travel \cite{hensher1994}. 
%
%Therefore, Instead of voters approving individual candidates, it is pertinent that voters express their preferences for attributes or characteristics of candidates. 
%
%Therefore, instead of voters approving individual candidates, it is pertinent that voters express their preferences for attributes or characteristics of candidates. 
%
%For instance, in order to form a cricket team, it is better to solicit approval on the attributes of player such as Run Tally Impact,  Partnership-Building Impact, Pressure Impact, New Ball Impact, etc., instead of seeking approval of individual players.  
%
%Expressing preferences directly on attributes will lead to effective committee selection. 
%
Practices such as bribery~\cite{dal2007bribing,heckelman1998bribing}, and manipulation~\cite{taylor2002} \footnote{Manipulability in a voting system is a scenario wherein a voter
%tries to manipulate the voting system by
submits a disingenuous ballot that favors his/her interesting outcome against his true preferences~\cite{taylor2002}.} that bias in favor of one party~\cite{besley2007} can be controlled with attribute level preferences.
%\haris{Either this point is not clearly communicated or it is not true. I don't think the manipulation angle is that important.}
 
%
%
Attribute level voting system reduces these practices due to the overlap of attributes among several candidates and the non-trivial complexity of investigating these practices at attribute level. 
Therefore, the challenge is to be able to effectively aggregate the voters' approval on attributes to determine the final set of
candidates to be selected into the committee. 
Thus far, there has not been any attempt in this direction in the context of committee formation, although several other approaches for committee formation do exist.

The significance of attributes in the committee formation has been highlighted by various researchers~\cite{brams1990constrained,lang2018multi,bredereck2018} wherein these works focus on committee formation with constraint satisfaction using voters' approval ballots on candidates.
%
%The  problem addressed by the above mentioned works is committee formation with constraint satisfaction given voters' approval ballots over candidates. 
%
%Authors consider candidate attributes for committee formation according to a predetermined set of constraints on the attribute values. 
%
For instance, Brams et al. ~\cite{brams1990constrained} represent a candidate with two attributes, ``Region of the candidate'' and ``Specialty''
and the constraint could be ``10\% of candidates in the committee should be from region A''.   
%
%In contrast, we deal with committee formation based on voters' approvals over attributes. 
%
%The problem proposed by 
Another significant work on multi-attribute
is by Lang et. al~\cite{Lang2016,lang2018multi} where they consider  proportional representation in the committee selection, but not necessarily based on voters' approvals.
%
%problem which is in the close vicinity of the proposed problem.  
%
Briefly, the problem in ~\cite{Lang2016,lang2018multi} is to find a committee that closely satisfies the desired proportional distribution for each of the attributes and consequently, the candidate representation takes precedence. 
%
%Extending the results of candidate-approval based committee selection to attribute-approval based method is not trivial. 
%
Whereas, we address the problem of selecting $k$ candidates given the voters' attribute approvals apriori and therefore, the constraints in our problem setting are on the proper representation of the voters. 

%Authors suggest that the distribution of these attributes might be obtained from the voters' approvals on attributes. 
%
%Instead, we consider a problem of forming a committee by directly using the voters' approvals over attributes. We also study different properties related to standard aggregation rules. 
%

Furthermore, considerable research has been done in the area of voting in combinatorial domains~\cite{lang2007vote,xia2007strongly,xia2008voting,lang2009sequential,li2011majority} to address the problem of collective decision-making over several domains or attributes given the voters/agents conditional preferences. 
%
%In this problem voters' express their preferences over multiple domains with some constraints. 
%
For example, if voters have to agree on a common menu to be composed of main course and wine and the conditional preference of some voter could be  ``if the main course is meat then I prefer red wine, otherwise I prefer white wine''.
The works in Lang et. al~\cite{lang2007vote,lang2009sequential} address this problem by decomposing the problem into smaller problems and sequentially making decisions over individual domains. 
At each stage their approach considers voters' conditional preferences of the current domain w.r.t. the previously selected candidates of other domains to determine a score of an individual candidate. 
%
%Further, authors investigate the decomposability of different voting rules with an assumption that voters' preference relations enjoy common preferential structure. %
%To eliminate this sequential nature of decision making, 
Xia et. al~\cite{xia2007strongly} defined order-independent sequential composition of voting rules and study properties of different voting rules in this context and further improvements were made by works in \cite{xia2008voting,li2011majority}. None of these schemes work for the current problem, nor can they be trivially extended. 

In this work, we address the committee selection problem wherein voters approve different attribute values. 
A candidate is represented by $d$ attribute values, which are drawn from a specified domain. 
Voters submit their approval ballots for each of $d$ attributes. 
Given the approval ballots of the voters, the objective is to select a committee $W$ of $k$ candidates.

\paragraph{Contributions}
Our key contributions are as follows.
{We present a new preference aggregation setting that simultaneously generalizes several important social choice settings. }
 First, we formally present the formulation of the problem and several analytical results of committee selection with approvals on attributes. 
Second, in the context of this problem, we revisit existing properties desirable of the final committee, such as homogeneity, consistency, monotonicity,  unanimity and justifiable representation, and provide the adapted definitions.
 We show that the two properties, namely  unanimity and justified representation are not satisfied by standard  aggregation techniques while taking into consideration attribute based approvals. 
 We focus on some key properties like \emph{weak unanimity, strong unanimity, simple justified representation} and \emph{compound justified representation}, showing some important complexity results.
Third, we propose a new aggregation rule based on greedy approach and show that this rule satisfies the unanimity and simple justified representation, but not the compound justified representation, which we prove to be NP-complete. 
 Finally, we show that many committees with justified representation are possible, but finding a justified committee with highest approval voting or highest satisfaction approval voting is an NP-complete problem. 
We propose an approximation scheme for this problem and perform worst case analysis. 

The organization of the paper is as follows. In Section 2, we formulate and analyze the committee selection problem with attribute approval voting.  In Section 3, we define standard properties of committee selection in the context of attribute approvals. The detailed analysis on unanimity and  justified representation are presented in Section 4 and Section 5, respectively. We propose a new rule called Greedy Approval Voting (GAV) in Section 6. In Section 7, we show that justified committee with highest AV, or SAV are NP-complete problems and propose an approximation scheme. 
%{We discuss Justified Approval Voting in Section 6}. 
Section 8 provides conclusions and scope for future work. 
\section{Attribute Approval Voting}
 Let $V = \{v_1, v_2,  \ldots, v_n\}$ be the set of voters and $C = \{c_1,
c_2, \ldots, c_m\}$ be the set of candidates.
Each candidate $c_i, 1\le i \le m$,
is associated with a $d$-dimensional attribute vector or simply, $d$ attributes. 
The attribute value $c_i[j]$ of candidate $c_i$ on dimension $j$ is from a domain $D^j, 1\le
j\le d$. 
Let $C_i^j$ denote the set of values in $D^j$ that are approved by voter $v_i$. 
We use $V_a$ to denote the set of voters
who have approved an attribute-value $a$. 
 The goal is to select a committee $W$ of $k$ candidates, given voters' approvals over attributes i.e., $C_i^j, \forall v_i\in V,  j\in [1,d]$. 
 Table~\ref{tab:notations} summarizes the notations used in this paper.  
 There are different aggregating rules known in the context of approval voting with candidate-approval. 
 We study below these rules in the context of attribute-approval committee selection. 
\begin{table}%[!htbp]
%\scriptsize
\caption{Summary of notations}
\adjustbox{max width=\linewidth}{
\begin{tabular}{|l|l|}
\hline
{\bf Notation} & {\bf Description}\\ \hline
$n$ & Number of voters\\ \hline
$m$ & Number of candidates\\\hline
$d$ & Number of dimensions \\ \hline
$k$ & Target committee size\\ \hline
$V$ & Set of all voters    \\ \hline
$C$ & Set of all candidates \\ \hline
$C_i$ & Set of candidates approved by a voter $v_i$ when d=1\\\hline
$c_i[j]$ & Attribute value of a candidate $c_i$ on dimension $j$\\\hline
$D^j$ & Set of domain values on domain $j$ \\  \hline
$C_i^j$& Set of attributes approved by a voter $v_i$ on dimension $j$ \\\hline
$W^j~(resp. C^j)$ & Set of attributes of $W~(resp. C)$ on dimension $j$\\\hline
$W$ & Target committee \\ \hline
%$V_a$ & Set of voters from $V$ who approved an attribute value $a$ \\ \hline
$V_i(resp.~ V'_i), i\in \mathbb{N}$ & Set of voters from $V(resp.~ V')$ who approved $c_i$ when $d=1$.\\ \hline
$V_{c_i[j]}$ & Set of voters from $V$ who approved $c_i[j]$.\\ \hline
$V_W$ & Set of voters from $V$ who approved at least one candidate in $W$\\ \hline
$c_*$ & candidate with highest utility score \\ \hline
$V_*$ & Set of voters who approved a candidate $c_*$ \\ \hline
\end{tabular}
}
\label{tab:notations}
\end{table}
%\vspace{0.1cm}

{The model presented in this paper is more general than party-based approval studied recently~\citep{BGP19a}. In the latter model, each candidate can be viewed as having exactly one attribute which corresponds to the party the candidate belongs to. On the other hand, in our model, a candidate can have multiple attributes. Our model is also more general than approval-based committee voting~\citep{kilgour2010,aziz2015}. Each candidate can be viewed as having its own unique attribute. Voters can be seen as approving these attributes rather than candidates.  }

\noindent{\bf Approval Voting (AV) $-$} 
Approval Voting selects a committee $W$ that maximizes $\sum_{v_i\in V}\lvert W \cap C_i\rvert$, where $C_i$ is 
the set of candidates approved by  a voter $v_i$~\cite{brams1978}.  
When voters submit their approvals on the attributes, the \emph{ approval voting score} of $c$ with respect to $V$ is defined
 as $AV(c,V) = \sum_{v_i\in V} \frac{\sum_{j=1}^d \lvert \{v_i \mid
c[j]\in C_i^j\} \rvert}{d}$. 
The approval voting score of a committee $W$ is $AV(W,V) =
\sum_{c\in W} AV(c,V)$. 
Hence, \emph{Approval Voting (AV)} rule selects $W$ with highest $AV(W,V)$, which  can be computed by maximizing $\sum_{v_i\in V}\sum_{j=1}^d \lvert W^j\cap C_i^j\rvert$,
where $W^j$ is the set of attributes on dimension $j$ of $W$. 
%
%\emph{Approval Voting} rule can be computed in polynomial time to form a committee. 
%
Computing approval score for each attribute involves scanning $n$ ballots and can be done in $O(n_{a}n)$ where $n_{a}$ is the number of distinct attributes values over all dimensions. 
Approval score of a candidate can be computed in $O(md)$ and identifying top-$k$ candidates takes $O(klog(m))$. 
Hence, the complexity of AV for attribute-approval is $O(n_{a}n)$, which is polynomial time complexity.

\vspace{0.05cm}
\noindent{\bf Satisfaction Approval Voting (SAV) $-$}
SAV~\cite{brams2014} selects a committee $W$ that maximizes voters %$W\subseteq C$ of size $k$ 
satisfaction score $ \sum_{v_i\in V}\frac{\lvert W \cap C_i\rvert}{\lvert C_i \rvert}$. 
In the case of attribute-approval voting, we define satisfaction score of  $v_i$ for $W$ as $SAV(W, {v_i}) =
\frac{AV(W, {v_i})}{min(AV(C,{v_i}), kd)} = \frac{\sum_{j=1}^d \lvert
W^j\cap C_i^j\rvert}{min(\sum_{j=1}^d \lvert
C^j\cap C_i^j\rvert, kd)}$ where $W^j~(resp. C^j)$ is the set of attributes of $W~(C, respectively)$ on
dimension $j$. SAV selects the  $W$ that maximizes $\sum_{v_i\in V} SAV(W, {v_i})$.
The complexity of  computing SAV is the same as that of computing AV.

\vspace{0.05cm}
\noindent{\bf Reweighted Approval Voting (RAV) $-$ } 
At every stage, RAV reweighs voter's approval score
of a candidate and selects the candidate with highest approval score~\cite{brams2014}.
We define
reweighed score $RAV(c, v_i)$ as $ r(v_i) \times \frac{\sum_{j=1}^d \lvert\{ v_i
\mid c[j]\in C_i^j \}\rvert}{d}$ where %$r(v_i)$ is the weight to be multiplied to $v_i$'s approvals and  is calculated as 
$r(v_i) =  \frac{1}{1 + AV(W, {v_i})}$. 
We start with $W=\emptyset$ and at every stage we select a candidate $c$ that
maximizes $\sum_{v_i \in V} RAV(c, v_i)$ till $\lvert W\rvert = k$. 
RAV is a multi-stage AV, and hence, the score computation needs to be done $k$ times. 
Therefore, the overall computation required in RAV is $O(n_{a}nk)$.

\vspace{0.05cm}
\noindent{\bf Proportional Approval Voting (PAV) $-$}
%PAV  was proposed  by Forest Simmons in 2001~\cite{kilgour2010} and is known to be NP-hard~\cite{aziz2015_1,skowron2016}. 
The objective of PAV~\cite{kilgour2010}  is to maximize the sum of voters' utilities, where utility of voter $v_i$ is $1 + \frac{1}{2} + \ldots + \frac{1}{\lvert W \cap C_i\rvert}$. With attribute-approval, 
PAV selects $W\subseteq C$ of size
$k$ that maximizes $\sum_{v_i \in V} u(AV(W, v_i))$ where $u(p) = 1+\frac{1}{2}+\ldots + \frac{1}{\lfloor p\rfloor}+\frac{1}{\lceil p\rceil}(\lceil p\rceil - p)$. 
PAV is known to be NP-hard~\cite{aziz2015_1,skowron2016}.

\vspace{0.05cm}
\noindent{\bf Minimax Approval Voting (MAV)$-$} MAV~\cite{brams2007minimax}
selects a committee $W$ that minimizes the maximum Hamming distance between 
$W$ and voters' approval ballots. 
%
%Given attribute approvals, 
We  define MAV-score of a committee $W$ 
as $Max (f(W, v_1), f(W, v_2), \ldots, f(W, v_n))$ where 
$f(W, v_i) = Max( {\bf d}(W^j, C_i^j)_{j=1}^d)$ 
and ${\bf d}(A^j,B^j) = \frac{ \lvert A^j\setminus B^j \rvert
+ \lvert B^j\setminus A^j\rvert}{\lvert D^j\rvert}$. 
MAV returns a committee $W$ with the lowest MAV-score.
MAV is also known to be NP-hard problem~\cite{legrand2007}.
\section{Properties}

%\haris{We should think deeply about fairness for candidates, fairness for attributes, fairness for voters etc. }

% 
In this section, we review some standard properties that are desired to be satisfied by multi-winner approval based rules~\cite{elkind2017properties}. 
Table~\ref{tab:notations1} summarizes the different properties satisfied by different rules. 

\begin{table}%[!htbp]
\caption{Summary of the properties satisfied by rules}
\adjustbox{max width=\linewidth}{
\begin{tabular}{|l|c|c|c|c|c|}
\hline
 & \textBF{AV} & \textBF{SAV} & \textBF{PAV} & \textBF{RAV} & \textBF{MAV} \\ \hline
\textBF{Homogeneity} & $\checkmark$ & $\checkmark$ & $\checkmark$ & $\checkmark$ & $\checkmark$    \\ \hline
\textBF{Consistency} & $\checkmark$ & $\checkmark$ & $\checkmark$ & $\checkmark$ & $\checkmark$    \\ \hline
\textBF{Monotonicity} & $\checkmark$ & $\checkmark$ & $\checkmark$ & $\checkmark$ & $\checkmark$    \\ \hline
\textBF{Committee Monotonicity} & $\checkmark$ & $\checkmark$ & {\boldm $\times$} & $\checkmark$ & {\boldm $\times$}    \\ \hline
\end{tabular}
}
\label{tab:notations1}
\end{table}

\noindent\emph{\textbf{Homogeneity $-$}} 
A rule is said to satisfy Homogeneity property if it selects the same $W$ independent of number of times  voters' ballot $\mathcal{B} = \{C_i^j, \forall v_i \in V, j\in[1,d]\}$ is replicated. 

\noindent\emph{\textbf{Consistency $-$}} 
A rule is consistent if it satisfies the following implication. 
If the winning committee is the same $W$ w.r.t voters lists $V$ and $V'$ individually then it should be the same $W$ with respect to the voters list $V\cup V'$.
%
% A rule is said to satisfy consistency if it constructs {the} same committee $W$ for voters lists $V$ and $V'$, it also generates $W$ with respect to the voters list $V\cup V'$.

\noindent\emph{\textbf{Monotonicity}$-$} 
A rule is monotonic if it satisfies the following two conditions, 1) If $c\in W$ with respect to $V$ then $c\in W$ with respect to $V_{c[j]} \leftarrow V_{c[j]}\cup V_x, V_x\nsubseteq V_{c[j]}, j\in [1,d] $ 2) If $c\notin W$ with respect to $V$ then $c\notin W$ with respect to $V_{c[j]} \leftarrow V_{c[j]}\setminus V_x, V_x\subseteq V_{c[j]}, j\in [1,d]$.

\noindent\emph{\textbf{Committee Monotonicity $-$}} Suppose $W$ and $W'$ are the committees selected by rule $R$ with 
$\lvert W \rvert  = k$ and $\lvert W' \rvert  = k+1$. 
The rule $R$ is committee monotonic if $W\subset W'$.

Besides {the} above properties, \emph{Unanimity} and \emph{Justified Representation} 
are very important properties that are desired to be satisfied by approval voting based rules. 
We study these 
two properties in subsequent sections. 
\section{Unanimity}\label{sec:unanimity}
In vote based committee selection, unanimity refers to an {agreement} by all voters.
Specifically, satisfying the property means that, if there {exists} a set of candidates who are unanimously approved by all voters then at least one of them should {be} present in the selected committee $W$. 
A rule is unanimous if it selects a committee $W$ such that 
$\underset{{v_i\in V}} {\cap} C_i \cap W \ne \emptyset$  when  $\underset{{v_i\in V}} {\cap} C_i \ne \emptyset$~\cite{aziz2015}. 

Using the above definition,  we give two definitions of unanimity for attribute level committee selection as follows: 
1) \emph{Weak Unanimity} $-$  If $\exists j\in [1,d]$ with $\underset{{v_i\in V}}{\cap} C_i^j \ne \emptyset$
then $\exists j'\in [1,d]$ with $\underset{{v_i\in V}}{\cap} C_i^{j'} \cap W^{j'} \ne \emptyset$ where $W^j$ is set of attributes of $W$ on dimension $j$. 
2)\emph{ Strong Unanimity $-$ } $\forall j\in [1,d]$ with $\underset{{v_i\in V}}{\cap} C_i^j \ne \emptyset$ it holds $\underset{{v_i\in V}}{\cap} C_i^j \cap W^j \ne \emptyset$.
We note that for $d=1$ both weak and strong unanimity\footnote{Going further, if there {exists} multiple unanimous candidates, then all of them should be present in the committee $W$.} convey the same meaning. 
%
 
%We limit our study to the standard unanimity as defined at the starting of this section. 

\begin{Lemma}
 \label{lem:unanimity}
For $k<d$, there may not exist a committee that satisfies strong unanimity.
\end{Lemma}
\noindent\emph{Proof.}
 Consider two candidates $c_1 = [a_1, b_1]$ and $c_2 = [a_2, b_2]$, and $k=1$. The approvals 
 for each of these attributes are given as $V_{a1} = V_{b2} = V$ and $V_{a2} = V_{b1} = \{v_1\}$. 
 Selecting any one of these 
 two candidates violates the strong unanimity property. 
 But, one can assure a committee $W$ that provides strong unanimity when $k\ge d$.
\hfill\qedsymbol{}

%  provided there exists a $W$ that covers at least one unanimous attribute on each dimension (if there exists unanimous attribute on that dimension). 
% all the attributes which are unanimously approved by all voters. 
% 

%\begin{Lemma}
%\label{lem:unanimity1}
% AV and SAV do not satisfy weak unanimity for $k\ge 1$ and $d>1$ and they satisfy weak unanimity for $d=1$. 
  %Given voters' attribute-approval, Approval Voting and Satisfaction Approval Voting do not satisfy weak unanimity for $k\ge 1$ and $d>1$. Approval Voting and Satisfaction Approval Voting satisfies unanimity for $d=1$. 
%\end{Lemma}
%\begin{proof}
  %Let $X^1 = \{v_1, v_2, \ldots, v_{n/2}\}, X^2 = V\setminus\{v_n\}$ and  
  %Let $V_{c_1[1]} = V, (V_{c_1[j]} = v_1)_{j=2}^d, ((V_{c_i[j]} = V\setminus\{v_i\})_{i=2}^m)_{j=1}^d$.
  %AV or SAV selects a set $W\subseteq C\setminus\{c_1\}$, whereas $c_1[1]$ is the only attribute which is unanimously approved by all voters and is not part of $W$. Hence, AV and SAV do 
  %not satisfy unanimity. For $d=1$, if there exists an attribute which is unanimously approved by all voters then the corresponding candidate secures 
  %highest approval score and satisfaction approval score. Hence,  AV and SAV satisfy unanimity for $d=1$. 
%\end{proof}

\begin{Lemma}
\label{lem:unanimity2}
AV, SAV, RAV, PAV, and MAV do not satisfy weak unanimity for $k\ge 1$ and $d>1$ and they satisfy weak unanimity for $d=1$.
\end{Lemma}
\noindent\emph{Proof.}
%Let $X^1 =  \{v_1, v_2, \ldots, v_{n/2}\}, X^2 = V\setminus\{v_n\}$ and $X^3 = V\setminus \{v_1\}$. 
%$V_{c_1[1]} = V, (V_{c_1[j]} = X^1)_{j=2}^d, ((V_{c_i[j]} = X^2)_{i=2}^{\lceil m/2\rceil})_{j=1}^d $ and $((V_{c_i[j]} = X^3)_{i=\lceil m/2\rceil+1}^{m})_{j=1}^d$. 
Let $X^1 = V\setminus\{v_n\}$ and $X^2 = V\setminus \{v_1\}$. $V_{c_1[1]} = V, (V_{c_1[j]} = v_1)_{j=2}^d, ((V_{c_i[j]} = X^1)_{i=2}^{\lceil m/2\rceil})_{j=1}^d $ and $((V_{c_i[j]} = X^2)_{i=\lceil m/2\rceil+1}^{m})_{j=1}^d$. 
AV, or SAV, or RAV, or PAV, or MAV selects a set $W\subseteq C\setminus\{c_1\}$, whereas $c_1[1]$ is the only attribute, which is unanimously approved by all voters and is not part of $W$. 
Hence, AV, SAV,  RAV,  PAV, and MAV do not satisfy weak unanimity for $d>1$.  {When $d=1$, the analysis is same as given in~\cite{aziz2015}}.
%
%For $d=1$, if there exists an attribute, which is unanimously approved by all voters, then the corresponding candidate secures highest score according to the rules AV, SAV, and RAV and hence, they satisfy unanimity for $d=1$.
%
%Replacing any candidate $c'\in C\setminus W$ with $c$ would not increase PAV-score of $W$.
%
%Hence, PAV satisfies unanimity for $d=1$.
%
%Similarly, when $d=1$, if there exists an attribute, which is approved by all voters, then it has smaller MAV score than that of any other attribute. 
%
%Thus, MAV selects attribute that is unanimously approved by all voters. 
%
\hfill\qedsymbol{}
%
%\begin{Lemma}
%\label{lem:unanimity3}
%MAV does not satisfy weak unanimity for $k\ge 1$ and $d>1$ and  it satisfies weak unanimity for $d=1$. 
%\end{Lemma}
%\begin{proof}
%We omit the generalized proof due to its complexity. Let $C= \{c_1, c_2, c_3\}, c_1 = [a_{11}, a_{12}, a_{13}, a_{14}], c_2 = [a_{21}, a_{22}\\, a_{23}, a_{24}], c_3 = [a_{31}, a_{32}, a_{33}, a_{34}]$. Consider 4 kinds of voters' ballots(voters' ballots for each attribute is separated 
%by comma) i.e., $\{a_{11}a_{21}a_{31}, a_{12}, a_{13}, a_{14}\}$, $\{a_{11}a_{21}a_{31}, a_{12}, a_{13}a_{23}a_{33}, a_{14}a_{24}a_{34}\}$, $\{a_{11}a_{21}a_{31}, a_{22}a_{32}\\, a_{23}a_{33}, a_{24}a_{34}\}$
%and $\{a_{11}, a_{22}a_{32}, a_{23}a_{33}, a_{24}a_{34}\}$. Let $k=1$, MAV score of $c_1$ is 1, $c_2$ and $c_3$ is 2/3. MAV selects a $W$ that minimizes MAV score, hence, it selects $c_2$ or $c_3$ whereas $a_{11}$ is the  only attribute approved by all voters and is neither part of  $c_2$ nor $c_3$. Similarly, it is easy to show that for $k>1$, MAV does not {satisfy} unanimity. When $d=1$, if there exists an attribute which is approved by all voters then it has smaller MAV score than that of any other attribute. Thus MAV selects attribute which is unanimously approved by all voters. 
%\end{proof}

\begin{Proposition}
\label{prop:unanimity}
  If a rule does not {satisfy} weak unanimity then it does not satisfy strong unanimity {as well}. Hence, none of the extended rules satisfy unanimity.
\end{Proposition}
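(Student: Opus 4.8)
The plan is to prove the contrapositive, namely that strong unanimity implies weak unanimity, and then conclude. Concretely, I would fix an arbitrary rule $R$, assume $R$ satisfies strong unanimity, and show it satisfies weak unanimity on every approval profile; the Proposition then follows by contraposition.

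First I would take an arbitrary profile of attribute-approvals together with the committee $W$ returned by $R$. If there is no dimension $j\in[1,d]$ with $\cap_{v_i\in V} C_i^j \ne \emptyset$, then the hypothesis of weak unanimity is not met and the condition holds vacuously, so there is nothing to check. Otherwise, pick some dimension $j\in[1,d]$ witnessing $\cap_{v_i\in V} C_i^j \ne \emptyset$. Since $R$ satisfies strong unanimity, the defining condition applies to \emph{this very} $j$, giving $\cap_{v_i\in V} C_i^j \cap W^j \ne \emptyset$. Hence the choice $j' = j$ satisfies exactly the existential conclusion required by weak unanimity, so $R$ is weakly unanimous.

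Therefore every rule satisfying strong unanimity also satisfies weak unanimity; contrapositively, if $R$ fails weak unanimity it must fail strong unanimity, which is the statement of the Proposition. The only points that need a word of care are the vacuous case (when no dimension has a nonempty intersection of approvals, both properties are trivially met) and the observation that the universally-quantified guarantee of strong unanimity can be instantiated at the particular dimension used to witness the hypothesis of weak unanimity; beyond matching up the two definitions there is no real obstacle, and no appeal to the mechanics of any specific aggregation rule is needed.
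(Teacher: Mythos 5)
Your proof is correct: the paper omits any proof of this Proposition (treating it as trivial), and your contrapositive argument --- instantiating the universal guarantee of strong unanimity at the very dimension $j$ that witnesses the hypothesis of weak unanimity, with the vacuous case handled separately --- is exactly the intended reasoning and the only argument needed. Nothing is missing.
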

%
 %From the Lemma \ref{lem:unanimity2} and the Proposition~\ref{prop:unanimity}, it can be seen that  none of the extended rules satisfy unanimity. % (both weak and strong). 
 %We propose a simple greedy based rule in Section 5, namely \emph{Greedy Approval Voting (GAV)}, that satisfies the unanimity property. 
 %
 %
% Another variation of unanimity is also possible~\cite{elkind2014} i.e., if there exist $k'$ candidates which are unanimously approved by all the voters then all of them should present in the selected committee $W$ if $k'<k$, otherwise $k$ of them should present in $W$. Again, both weak and strong unanimity are possible with this variant also in the same way as we defined previously. It is to be noted that all the lemmas defined above are applicable to this variant also. 
%
\section{Justified Representation}\label{sec:jr}
Justified representation\footnote{Similar notions are, Extended Justified Representation~\cite{aziz2015}, Proportional Justified Representation~\cite{sanchez2017}, Proportional Representation~\cite{pereira2016} and Strong Proportional Representation~\cite{pereira2016}.} is a crucial property that is desired to be satisfied by approval based rules. 
For satisfying this representation, if there exists a sizeable group of voters with  common preferences then the group should have {representation} in the committee. 
%
%The minimum number of representatives to be selected from the group is based on the size of the group and the number of common approvals. 
%
Based on this, we provide two definitions of justified representation for attribute-approval voting as follows: 1) Simple Justified Representation (SJR) and 2) Compound Justified Representation (CJR). 
For $d=1$ simple and compound justified representations are the same. 

% \subsection{Simple Justified Representation (SJR)}
\begin{Definition}
A committee $W$ satisfies simple justified representation if $\forall V'\subseteq V$: 
$
(\lvert V' \rvert \ge \frac{n}{k})\wedge(\exists j\in [1,d]~ with \underset{v_i\in V'}{\cap}~ C_i^{j}  \ne \emptyset) \implies (\exists j'\in [1,d]~ with~ \underset{v_i\in V'}{\cup}C_i^{j'}\cap W^{j'}\ne \emptyset).
$
% such that \wedge(\lvert \underset{v_i\in V'}{\cup}C_i^{j'}\cap
%W^{j'}\rvert = \emptyset), 1\le j,j'\le d$.

% Simple Justified Representation $-$
%$W$ provides simple justified representation if there does not exist a set of voters $V'\subseteq V$
% such that $(\lvert V' \rvert \ge \frac{n}{k})\wedge(\underset{v_i\in
%V'}{\cap}~ C_i^{j}  \ne \emptyset)\wedge(\lvert \underset{v_i\in V'}{\cup}C_i^{j'}\cap
%W^{j'}\rvert = \emptyset), 1\le j,j'\le d$.
\end{Definition}

For $k=1$, any random candidate also satisfies simple justified representation unless there exists a candidate who has no approval on any attribute. 
\begin{Lemma}
\label{lem:SJR1}
  Approval voting does not {satisfy} SJR for $k\ge2$ and $d\ge 2$
  or when $k\ge3$ and $d\ge1$.
\end{Lemma}
\noindent\emph{Proof.}
 Let $C = \{c_1, c_2, \ldots, c_{k+2}\}$ be the set of candidates, $ X^1 = \{v_i, i \in [1,n-\frac{n}{k}]\}$, and
 %$ X^1 = \{v_i, i \in [1, \frac{n(k-1)}{k}]\},$
 $X^2 =\{v_i, i\in [n-\frac{n}{k}+1, n]\}$. 
 %$X^3 = \{v_n\}$.
 %
 Consider profiles $((V_{c_i[j]} = X^1)_{i=1}^k)_{j=1}^d$,
 $(V_{c_{k+1}[j]} = \{X^2\})_{j=1}^{\frac{d}{2}}$, $(V_{c_{k+1}[j]} = v_n)_{\frac{d}{2}+1}^d$, 
 $(V_{c_{k+2}[j]} =  v_n)_{j=1}^{\frac{d}{2}}$ and $(V_{c_{k+2}[j]} = \{X^2\})_{j=  \frac{d}{2}+1}^d$. 
 Candidate $c_{k+1}$ or $c_{k+2}$ has to be present in $W$ in order to satisfy SJR whereas AV selects $\{c_1, \ldots, c_k\}$. 
 It is proven in~\cite{aziz2015} that AV fails JR for $k\ge 3$, and when $d=1$, SJR is same as JR. 
 Hence, AV fails SJR for $k\ge3$ and $d\ge 1$.
\hfill\qedsymbol{}
%
%
%\iffalse
\begin{Lemma}
\label{lem:SJR2}
  PAV and RAV do not satisfy SJR for $k\ge3$ {and $d\ge 2$} or $d\ge 3$ {and  $k\ge2$}. For $k\le2$ and $d\le 2$, PAV and RAV satisfy SJR. 
\end{Lemma}
\begin{proof}
 We omit the generalized proof due to its complexity. Let $C = \{c_1, c_2, \ldots, c_5\}$,  $n = 90$, $d=2$, and $k=3$, attribute-wise voter lists are 
 $(V_{c_1[j]}= \{v_i\}_{i=1}^{35})_{j=1}^d, (V_{c_2[j]}= \{v_i\}_{i=21}^{55})_{j=1}^d, (V_{c_3[j]}  = \{v_i\}_{i=26}^{50})_{j=1}^d, (V_{c_4[j]} =\{v_1\})_{j=1}^{\frac{d}{2}}$, $(V_{c_4[j]}=  \{v_i\}_{i=61}^{90})_{j=\frac{d}{2}+1}^{d}, (V_{c_5[j]}= \{v_i\}_{i=61}^{90})_{j=1}^{\frac{d}{2}},(V_{c_5[j]}= \{v_1\})_{j=\frac{d}{2}+1}^{d} $.
 %$(V_{c_1[j]}= \{v_1, v_2, \ldots, v_{35}\})_{j=1}^d, (V_{c_2[j]}= \{v_{21}, \ldots, v_{55}\})_{j=1}^d$, $(V_{c_3[j]}  = \{v_{26},  \ldots, v_{60}\})_{j=1}^d$, $(V_{c_4[j]} = \{v_1\})_{j=1}^{\frac{d}{2}}, (V_{c_4[j]}=  \{v_{61}, \ldots, v_{90}\})_{j=\frac{d}{2}+1}^{d}, (V_{c_5[j]}= \{v_{61}, \ldots, v_{90}\})_{j=1}^{\frac{d}{2}},(V_{c_5[j]}= \{v_1\})_{j=\frac{d}{2}+1}^{d} $.
 %Let us assume $k=3$ and $d=2$. 
 PAV (or, RAV) selects $\{c_1, c_2, c_3\}$ and ignores  set of $\frac{n}{k}$ voters who jointly approved $c_4[2]$ and $c_5[1]$.
 %Hence, PAV and RAV do not satisfy simple justified representation. 
 To extend the proof to $k>3$, we take $k-3$ additional candidates and  $(k-3)\times 30$ additional voters,  and assign 30 unique votes to each attribute of a new candidate.
 %In this case also PAV(or, RAV) ignores $c_4$ and $c_5$, hence, ignores a set of $n/k$ voters who jointly approved for some attribute.
%
 For $k=2$ and $d\ge3$: Let  $(V_{c_1[j]}= V_{c_2[j]}= \{v_i\}_{i=1}^{30})_{j=1}^d, V_{c_3[1]} = \{v_i\}_{31}^{60}, (V_{c_3[j]} = \{v_{1}\})_{j=2}^d$. PAV or RAV selects $W$ as $\{c_1, c_2\}$ and disregards a set of $n/k$ voters. Hence, PAV and RAV do not satisfy SJR for $k=2$ and $d\ge3$. Similarly for $k>2$ and $d\ge3$. 
 %It is easy to see that they do not satisfy SJR for $k>2$ and $d\ge3$. 
 %Hence, PAV and RAV do not satisfy simple justified representation for $k\ge3$ {and $d\ge 2$} or $d\ge 3$ {and  $k\ge2$}.
 
 For $k=2$ and $d=2$: RAV selects a candidate with highest AV score in the first iteration.  The candidate having $\frac{n}{2}$ approvals for one of its attributes will have the highest score in the second iteration. %Hence, { no $n/2$ voters are completely unrepresented if they jointly approve some attribute}.
%completely unrepresented $n/2$ voters. 
Similar logic works for PAV. Hence, RAV and PAV satisfy SJR for $k=2$ and $d=2$. {When $d=1$, the analysis is same as given in~\cite{aziz2015}}.
\end{proof}

\begin{Lemma}
\label{lem:SJR3}
 SAV and  MAV  do not satisfy SJR for $k\ge2$, $d\ge1$.
\end{Lemma}
\noindent\emph{Proof.}
When $d=1$, attribute-approvals can be visualized as candidate-approvals. {It} is proved for candidate-approvals that MAV and SAV do not satisfy justified representation for $k\ge2$~\cite{aziz2015}. If we replicate the voters' ballots across all the dimensions then the proof follows.  
%For $d>1$, the 
%same set of voters are repeated for all the %attributes of a candidate then the proof follows.
\hfill\qedsymbol{}

\begin{Definition}
$W$ provides compound justified representation (CJR)  if $\forall~ V'\subseteq V$ and $\forall i \in [1,d]$: $(\lvert V' \rvert \ge \frac{n}{k})~ \wedge~ (\underset{v_i\in V'}{\cap}~ C_i^{j}  \ne \emptyset) \implies (\underset{v_i\in V'}{\cup}C_i^{j}\cap W^{j} \ne \emptyset).
$

%$W$ provides compound justified representation if there does not exist a set of voters $V'\subseteq V$  such that $(\lvert V' \rvert
%\ge \frac{n}{k})\wedge(\underset{v_i\in V'}{\cap}~C_i^{j}  \ne \emptyset)\wedge(\lvert (\underset{v_i\in V'}{\cup} C_i^{j})\cap
%W^{j}\rvert = \emptyset), 1\le j\le d$.
\end{Definition}
\begin{Lemma}
 \label{lem:JR3}
For $k\ge 3$,  there may not exist a committee that provides compound justified representation .
 \end{Lemma}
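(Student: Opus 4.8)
\noindent The plan is to exhibit, for every $k\ge 3$, a single attribute-approval instance with $d=2$ (readily padded to any $d>2$ without changing the conclusion) in which no committee of size $k$ provides compound justified representation.

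First I fix the attribute domains $D^1=\{A_1,\dots,A_k\}$ and $D^2=\{B_1,\dots,B_k\}$ and take the ``star'' candidate set $C=\{(A_i,B_1):1\le i\le k\}\cup\{(A_1,B_s):1\le s\le k\}$, where $(A_i,B_s)$ denotes the candidate whose value on dimension $1$ is $A_i$ and on dimension $2$ is $B_s$. Note that every value of $D^1$ and of $D^2$ is realized by some candidate; this matters, since otherwise CJR would fail trivially for every $k\ge1$ and the bound $k\ge3$ would not be the right one. I then set $n=k$ and split the electorate into singletons $V_1,\dots,V_k$, the voter in $V_t$ approving exactly $A_t$ on dimension $1$ and exactly $B_t$ on dimension $2$; so $\tfrac nk=1$, and each $V_t$ is a group of the required size that is cohesive on both dimensions, with $\bigcap_{v_i\in V_t}C_i^1=\{A_t\}$ and $\bigcap_{v_i\in V_t}C_i^2=\{B_t\}$ (hence the same instance works even under the reading of CJR that demands cohesion on every dimension).

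The heart of the argument is a short case analysis on an arbitrary committee $W$ with $|W|=k$. If $A_t\notin W^1$ for some $t$, then $V_t$ certifies a CJR violation on dimension $1$: indeed $\bigcap_{v_i\in V_t}C_i^1=\{A_t\}\ne\emptyset$ while $\bigl(\bigcup_{v_i\in V_t}C_i^1\bigr)\cap W^1=\{A_t\}\cap W^1=\emptyset$. Otherwise $W^1=\{A_1,\dots,A_k\}$; but in $C$ the only candidate whose first coordinate is $A_i$ (for $i\ge2$) is $(A_i,B_1)$, so $W$ must contain all of $(A_2,B_1),\dots,(A_k,B_1)$, and since $|W|=k$ the one remaining member of $W$ has the form $(A_1,B_x)$ for some $x\in\{1,\dots,k\}$. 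Consequently $W^2=\{B_1,B_x\}$ has at most two elements, and because $k\ge3$ there is an index $s\in\{2,\dots,k\}\setminus\{x\}$ with $B_s\notin W^2$; then $V_s$ certifies a CJR violation on dimension $2$. Either way $W$ does not provide CJR, so no committee does.

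I would close by noting that the bound is tight: for $k=2$ the same star pool admits the CJR committee $\{(A_1,B_2),(A_2,B_1)\}$, whose projections $\{A_1,A_2\}$ and $\{B_1,B_2\}$ represent every block. The only delicate ingredient is the design of $C$ --- it must be rich enough that every approved value is realized, yet poor enough that forcing $W^1$ to cover all of $A_1,\dots,A_k$ collapses $W^2$ to at most two values. This ``no-transversal'' property of the star pool is precisely where the hypothesis $k\ge3$ is used; the rest is a routine check against the definition of CJR.
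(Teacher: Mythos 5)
Your construction is correct and establishes the lemma, but it is genuinely different from the paper's. The paper fixes $k=3$ and uses six candidates $c_i=[a_i,b_i]$ with three disjoint voter blocks of size $n/k=2$; the key mechanism there is that a block cohesive on both dimensions whose two approved values sit on \emph{different} candidates (e.g.\ $\{v_3,v_4\}$ approving $a_2$ and $b_5$) forces two committee seats, so two such blocks already demand four seats against $k=3$. Your ``star'' pool instead makes the obstruction a counting/transversal argument uniform in $k$: covering all of $A_1,\dots,A_k$ on dimension~1 forces $k-1$ committee members to share the value $B_1$ on dimension~2, so $\lvert W^2\rvert\le 2<k$ and some singleton block is unrepresented there. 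What your route buys is a single instance for every $k\ge 3$ (the paper only exhibits $k=3$ and leaves the generalization implicit) plus an explicit tightness check at $k=2$, which matches the paper's closing remark that CJR committees exist for $k=2$ when every attribute value is realized. Two small cautions: first, your argument (like the paper's) relies on reading the quantifier ``$1\le j\le d$'' in the CJR definition existentially in the non-representation clause; under a fully universal reading (unrepresented on \emph{every} dimension) your committee $\{(A_2,B_1),\dots,(A_k,B_1),(A_1,B_1)\}$ would in fact satisfy CJR, since every block is represented on dimension~1 --- so your parenthetical about the ``every dimension'' reading is only safe if non-representation stays existential. Second, the ``readily padded to $d>2$'' claim deserves one sentence of justification (e.g.\ give every candidate the same value on each extra dimension and have every voter approve it), since careless padding could introduce new violating groups.
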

\noindent\emph{Proof.}
 Consider a set of candidates $c_i = [a_i, b_i], i\in[1,m]$ with $m=6$. Voting approvals of attributes are
 given as $V_{a1} = V_{b4} = \{v_1, v_2\}$, $V_{a2} = V_{b5} = \{v_3, v_4\},~ V_{a3} = V_{b6} =\{v_5, v_6\}, V_{b1} = V_{b2} = V_{b3} = V_{a4} = V_{a5} = V_{a6} =  \{v_1\}$. For $k=3$, none of the three candidate committees satisfy compound justified representation. One can assure a committee that satisfies CJR for $k=2$ if we assume that each attribute is approved by at least one voter.
\hfill\qedsymbol{}

\begin{Proposition}
\label{prop:JR}
   CJR implies SJR. 
   % If a rule does not satisfy SJR then it will not satisfy CJR also.  
    
    %It can be seen from Lemmas (\ref{lem:SJR1}-\ref{lem:SJR3}) and Proposition~\ref{prop:JR} that 
%AV, SAV, RAV, PAV and  MAV do not satisfy SJR as well as CJR.
\end{Proposition}

\begin{Lemma}
 \label{lem:CJR1}
 Checking whether there exists a committee that provides CJR is a NP-complete problem for $k\ge 2$ 
 and $d\ge 2$. 
\end{Lemma}
\noindent\emph{Proof.}
 Given a committee, CJR satisfiability is verifiable in polynomial time, hence, the problem is in NP. 
 We reduce the set cover problem to the committee selection problem to show that the current problem is NP-hard. 
 Given a set of subsets $S_1, S_2, \ldots, S_{m'}$ with $n'$ elements, the set-cover problem is to select $k'$ subsets such that 
 they cover all the elements in $\cup_{i=1}^{m'} S_i$. 
 We take an instance of a  set cover problem with $n'$ elements and $m'$ subsets, and construct $m=m'+2kn'$ candidates with $n=kn'$ voters. 
 We take elements of a subset sum problem as voters and subsets of elements as subsets of voters. 
 Let $V^{CJR}_{1}  = \cup_{i=1}^{m'} S_i = \{v_1, v_2, \ldots, v_{n'}\}$ be the subset of voters constructed from the elements of a subset selection problem and let $V^{CJR}_{i} = \{v_{n'(i-1)+1}, \ldots, v_{n'i}\}, \forall i \in [2,k]$. 
 We construct candidates' approvals as follows using the voters list $V = \cup_{i=1}^k V^{CJR}_{i}$. 
 \\
 %\[
  $Initialize: V_{c_i[j]} \leftarrow \emptyset, \forall~ i\in [1,m], j\in[1,2];$\\
 %\[
  $V_{c_i[j]}\leftarrow S_i, \forall~ i\in [1,m'], j\in[1,2];$
 %\]
 %\[
  $V_{c_x[1]}\leftarrow \{v_i\}\cup \{V^{CJR}_{h}\setminus v_{hn'}\}, \forall~ i\in [1,n'], h\in [2,k]$, 
 %\]
  where $x = m'+(h-2)n'+i$.
% 
 % \[
  $V_{c_x[2]}\leftarrow \{v_i\}\cup \{V^{CJR}_{h}\setminus v_{hn'}\}, \forall~ i\in [1,n'], h\in [2,k], $
 %\]
  where $x = m'+(k+h-3)n'+i$.
%  
  %\[
  $V_{c_x[1]}\leftarrow  \{V^{CJR}_{1}\setminus v_{i}\}\cup\{v_{2n'}\}, \forall~ i\in [1,n'], $
 %\]
 where $x = m'+(2k-2)n'+i$.
%  
%  
  %\[
  $V_{c_x[2]}\leftarrow  \{V^{CJR}_{1}\setminus v_{i}\}\cup\{v_{2n'}\}, \forall~ i\in [1,n'], $
 %\]
 where $x = m'+(2k-1)n'+i$.
 $
  \forall~ i\in[m+1, m+2kn'], j\in [1,2]~ and~ k\ge3,~ if~ V_{c_i[j]}=\emptyset~ then~ V_{c_i[j]} = \{v_{hn'}\}, h\in[3,k].
 $
  We can see that if there is an yes instance in the set cover problem, there will be a yes instance in our problem for $k\ge2$, otherwise not.   
  %
  \iffalse
  One can try the following instance of a subset-sum problem to work out the reduction. $S_1 = \{v_1, v_2, \ldots v_7\},~ S_2 = \{v_{8}, v_9, \ldots v_{14}\},~  S_3 = \{v_1,  \ldots v_4, v_8 \ldots v_{11} \},~  S_4 = \{v_5, v_6, v_{12}, v_{13}\},$ and $S_5 = \{v_7, v_{14}\}$ are subsets from the set cover problem with $n' = 14$, $m' = 5$, and $k=2$. 
  \fi
\hfill\qedsymbol{}

 We adopt Greedy Approval Voting (GAV)~\cite{aziz2015} and extend it to attribute level in the next section. 
 We show that  GAV satisfies unanimity, SJR, and CJR {under some assumptions}. 

\subsection{Justified Committee with highest (S)AV}

We note that, the committee with highest (S)AV may not always be a justified committee. In this section, we show that many justified committees are possible for the given voter approvals over attributes and the following two problems are NP-complete, 1) Selection of justified committee with highest AV and 2) Selection of justified committee with the highest SAV.  

The preceding discussion is applicable for both simple and compound justified representations. However, we have already proven that finding a compound justified representation committee problem is NP-complete for $k\ge 2$ and $d\ge 2$, and when d=1 compound and simple justified representations are the same. Therefore, we give the NP-completeness proof related to simple justified representation in the Lemma~\ref{lem:NP_HAV}.

\subsubsection{Justified committee with highest Approval Voting}
%\vspace{0.5cm}

\emph{Approval Voting (AV)} rule satisfies \emph{Justified Representation} for $k=1 \wedge d=1$, and $k=2 \wedge d=1$ if ties are 
broken in favor of sets that provide justified representation. 
However, it is not the case for $k\ge2 \wedge d\ge2$ and $k\ge 3\wedge d=1$. %
\noindent\begin{Lemma}
 \label{lem:NP_HAV}
 Justified committee with highest AV problem is NP-complete for $k\ge 2\wedge d\ge 2$ and $k\ge 3\wedge d=1$
\end{Lemma}
\begin{proof}
Let $V_W$ be a multi-set representing all the voters who approved attributes of a committee $W$. If a voter approves multiple attributes in $W$ he would appear multiple times in $V_W$.   The following is the decision problem corresponding to the justified committee with highest AV problem. \\
\emph{Question:}
Does there exist a committee $W$ of size $k$ such that $\lvert V_W \rvert \ge \tau$ and 
$\lvert V_{c_i[j]} \setminus V_W \rvert < \frac{n}{k},~  \forall c_i[j] \in \{C\setminus W\}$ ?. 

 Given a certificate of solution, 
we can easily verify that whether or not a committee satisfies the required constraints in polynomial time. 
Hence, the problem is in NP.  
We reduce the set cover problem to the committee selection problem to show that the current problem is NP-hard. 
To reduce set cover problem, we 
take an instance of a set cover problem with $n'$ number of elements and $m'$ subsets. 
For every instance with $n'$ elements and $m'$ subsets in a subset cover problem, we have $n = 2kn'$ number of voters and $m = m' + n'$ number of candidates/subsets in a committee selection problem. 
The subsets in the committee selection problem are the set of voters who approved for each candidate. 
Let $S_i, i\in[1,m']$ are subsets from the set cover problem, $V'$ and $V''$ are disjoint voter lists with $\lvert V' \rvert = (2k-3)n'$ and $\lvert V'' \rvert = 2n'$ respectively. Let $S = \cup_{i=1}^{m'} S_i = \{v_1, v_2, \ldots, v_{n'}\}$. 
We construct the candidates' subsets/voter-lists in the following manner. We take $m'$ subsets with $n'$ elements from the set cover problem and add a set $V'$ of voters to each subset i.e., $V_{c_i[j]}\leftarrow S_i \cup V', \forall i\in [1,m'], j\in[1,d]$. 
Next, we construct $n'$ additional candidate voter-lists with one voter from $S$ in each of these lists i.e. $V_{c_i[1]} \leftarrow \{\{v_{i-m'}\} \cup \{V''_{2n'-1}\}\}~ and~ V_{c_i[j]} \leftarrow \{v_{i-m'}\}, \forall i\in [m'+1, m'+n'], j\in[2,d]$, where ${V''}_{2n'-1}$ is a proper subset of $V''$ with size $2n'-1$.  
Finally, we set $\tau$ to $d(k((2k-3)n')+n')$. It is easy to see that there exist a justified committee of size $k$ with approval count greater than or equal to $\tau$ iff there exists a set cover, for $k\ge2$ and $d\ge 2$. The same proof works for $k\ge 3$ and $d=1$ if we exclude the rules relevant for $d>1$. However, NPC proof related to $k\ge 3\wedge d=1$ is invented simultaneously and appeared in \cite{bredereck2019}.  
%
%To illustrate, let us take the following instance of a set cover problem with $n' = 21,~ m' = 9$, $S_1 = \{v_1, v_2, \ldots v_7\},~ S_2 = \{v_{8}, v_9, \ldots v_{14}\},~ S_3 = \{v_{15}, v_{16}, \ldots, v_{21}\},~ S_4 = \{v_1,  \ldots v_4, v_8 \ldots v_{11} \},~ S_5 = \{v_8 \ldots v_{11}, v_{15} \ldots v_{18} \},~ S_6 = \{v_5, v_6, v_{12}, v_{13}\},~ S_7 = \{v_{12}, v_{13}, v_{19}, v_{20}\}, S_8 = \{v_7, v_{14}\}$, and $S_9 = \{v_{14}, v_{21}\}$. We create $m'+n' = 9+21$ number of candidates in a committee selection problem with a set of voters approved for each candidate. Let $V' = \{v_{22}, v_{23} \ldots, v_{84}\}$ and $V''=\{v_{85}, v_{86}, \ldots, v_{126}\}$.We create voter sets, $V_i, i\in[1,m'+n']$, corresponding to all $m'+n'$ candidates using the above rules. We then set $\tau = k((2k-3)n')+n' = 10\times 21 = 210$, and $\frac{n}{k} = 42$. If there exists a justified committee of size $k=3$ with approval score $\ge 210$ then there is a set cover of size $k$ in the set cover. Otherwise, there is no set cover of size $k$ and vice-versa. 
\end{proof}

\subsubsection{Justified Committee with highest SAV:}
Satisfaction Approval Voting (SAV) yields the same results as Approval Voting when all voters approve an equal number of candidates. Hence, as a direct extension, we note that finding the justified committee with the highest-SAV problem is NP-complete.

\section{Attribute Level Greedy Approval Voting}
%   We start by setting $W=\emptyset, C' = C$ and $V'=V$.
%  
%  
 \emph{Greedy Approval Voting (GAV)}, shown in Algorithm~\ref{algo-GAV},  starts by setting $V'= V$ and $ W  = \emptyset$. 
 At each iteration, GAV selects a candidate $c_*$ having an attribute with the highest number of approvals with respect to $V'$ and add it to $W$.  
 GAV removes all voters who voted for at least one attribute of $c_*$ from $V'$. 
 This process is repeated till $\lvert W \rvert = k$. 
 In case the voter list $V'$ is empty when $\lvert W \rvert < k$, we set $V'$ to $V$.  
Once the voter list is empty, {random selection of candidates would satisfy the weak unanimity and SJR properties but fails many other properties}. 
%
%GAV is {similar to the greedy approach} proposed in~\cite{aziz2015} when $d=1$ and the selection is arbitrary. 
%
%The main flow of GAV can be found in .
% 
\begin{algorithm}[h!]
\label{algo-GAV}
     \caption{Greedy Approval Voting}
     \SetAlgoLined
%       \KwIn{$C, V, k, \{\{C_i^j\}_{v_i\in V}\}_{j=1}^d$}
%       \KwOut{$W\subseteq C, \lvert W \rvert = k$}
% 
      $W \leftarrow \emptyset$; $V'\leftarrow V$\;
      %\tcp*{ $V'_{c[j]}$ is a set of voters approved for $c[j]$ from $V'$;}
      \While {$\lvert W \rvert < k$} {
 	 $c_* \leftarrow \underset{c_i\in C}{Argmax}~ Max(\lvert V'_{c_i[j]}\rvert)_{j=1}^d$\;
         $W\leftarrow W \cup \{c_*\}$;   
         $C\leftarrow C \setminus \{c_*\}$;
         $V'\leftarrow V'\setminus\{\cup_j~ V_{c_*[j]}\}$;
         \lIf{$\lvert V' \rvert = 0$}  {$V'\leftarrow V$; }
      }
\end{algorithm}
\begin{Lemma}
 Greedy Approval Voting satisfies weak unanimity.
\end{Lemma}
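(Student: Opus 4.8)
The plan is to show that whenever some dimension $j$ admits a unanimously approved attribute value, GAV necessarily places such an attribute value into $W$ at its very first selection step, which immediately yields the witness dimension $j'$ required by the definition of weak unanimity.

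First I would set up the situation: suppose $\exists j \in [1,d]$ with $\bigcap_{v_i \in V} C_i^j \ne \emptyset$, and fix an attribute value $a$ in this intersection, so $V_a = V$, i.e.\ $\lvert V_a \rvert = n$. At the first iteration of the while loop we have $V' = V$, so for the candidate $c$ with $c[j] = a$ we get $\lvert V'_{c[j]} \rvert = n$, hence $MAX(\lvert V'_{c_i[\ell]}\rvert)_{\ell=1}^d \ge n$ for this $c$. Since no attribute value can be approved by more than $n$ voters, the $ARGMAX$ in the algorithm selects some candidate $c_*$ for which $MAX(\lvert V'_{c_*[\ell]}\rvert)_{\ell=1}^d = n$, i.e.\ there is a dimension $j'$ with $\lvert V'_{c_*[j']} \rvert = n$. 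Because $V' = V$ at this step, $\lvert V_{c_*[j']} \rvert = n$ means $V_{c_*[j']} = V$, i.e.\ $c_*[j']$ is approved by all voters, so $c_*[j'] \in \bigcap_{v_i \in V} C_i^{j'}$. Since $c_* \in W$, the value $c_*[j']$ lies in $W^{j'}$, and therefore $\bigcap_{v_i \in V} C_i^{j'} \cap W^{j'} \ne \emptyset$, which is exactly weak unanimity.

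I would also remark on the edge cases so the argument is airtight: the first selection step always executes because $k \ge 1$, and $V' = V$ is guaranteed to be non-empty there (the $V' \leftarrow V$ reset only matters at later steps), so the attribute value $a$ is genuinely available with full support $n$ at that step. Subsequent steps only shrink $W$'s complement and never remove elements already in $W$, so the witness $c_*[j']$ stays in $W^{j'}$ for the rest of the run. No assumption on $k$ versus $d$ is needed, in contrast to strong unanimity (Lemma~\ref{lem:unanimity1}).

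The only mild subtlety — and the one place to be careful rather than a genuine obstacle — is that the $ARGMAX$ need not pick the particular candidate $c$ with $c[j] = a$; it may pick a different candidate $c_*$ that happens to tie at the maximum value $n$ on some other dimension. The argument above handles this: any tying candidate must itself have a dimension carrying a unanimously approved value, so weak unanimity is satisfied by whichever candidate the tie-break chooses. I expect the write-up to be short, essentially the three sentences of the second paragraph.
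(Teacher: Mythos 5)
Your proposal is correct and follows essentially the same route as the paper, whose entire proof is the one-line observation that GAV selects a candidate carrying the unanimously approved attribute at the first step. You add the worthwhile precision that the $ARGMAX$ may return a \emph{different} candidate tying at support $n$ on some other dimension $j'$, and that any such candidate still witnesses weak unanimity --- a subtlety the paper's phrasing glosses over but which your argument handles cleanly.
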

\noindent\emph{Proof.}
 If there exists an attribute which is approved by all voters, GAV selects the corresponding candidate first. Hence, GAV satisfies weak unanimity.  
\hfill\qedsymbol{}
\begin{Lemma}
  GAV satisfies strong unanimity if ties are broken in favor {of} the candidates that provide strong unanimity.
\end{Lemma}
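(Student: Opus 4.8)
The plan is to follow the run of GAV and keep track of which \emph{unanimous dimensions} have already been served. Put $J=\{\,j\in[1,d]:\bigcap_{v_i\in V}C_i^{j}\neq\emptyset\,\}$, and for $j\in J$ let $S_j=\{c\in C: c[j]\in\bigcap_{v_i\in V}C_i^{j}\}$ be the candidates that witness strong unanimity on dimension $j$. As is implicit throughout (by the earlier lemma, strong unanimity can fail to exist when $k<d$), I assume $k\ge d$ and that each $S_j$ is nonempty, since otherwise no committee can provide strong unanimity and the statement is vacuous; if $J=\emptyset$ the claim is trivial, so assume $J\neq\emptyset$. The tie-breaking hypothesis is used in the following form: among the candidates attaining the current maximum attribute-approval score, GAV picks one lying in some $S_j$ with $j\in J$ not yet represented in the partial committee $W$, whenever such a candidate is still available.

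The first step is the observation that a top score of $n$ forces a ``unanimous'' pick. Whenever the current voter set is $V'=V$, every attribute value has at most $|V'|=n$ approvals in $V'$, and it has exactly $n$ precisely when it belongs to $\bigcap_{v_i\in V}C_i^{j}$ for its dimension $j$, i.e.\ precisely when $j\in J$. Hence, as long as $V'=V$ and some $j\in J$ is still unrepresented in $W$, the score-maximising candidates are exactly those in $\bigcup_{j\in J}S_j$; this set is nonempty and still contains a candidate serving an unrepresented dimension (no candidate of $S_{j_0}$ can have been chosen earlier without serving $j_0$), so the tie-breaking rule makes GAV add some $c_\ast$ with $c_\ast[j]\in\bigcap_{v_i\in V}C_i^{j}$ for a previously unrepresented $j\in J$. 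Moreover $c_\ast[j]\in\bigcap_{v_i\in V}C_i^{j}$ means $V_{c_\ast[j]}=V$, so the update $V'\leftarrow V'\setminus\bigcup_{j'}V_{c_\ast[j']}$ empties $V'$ and GAV resets $V'\leftarrow V$. Thus, while $J$ is not yet fully represented, every iteration of the loop starts from $V'=V$, serves at least one new dimension of $J$, and deletes its chosen candidate from $C$, so there is no re-use. Since $|J|\le d\le k$, after at most $|J|$ iterations we have $\bigcap_{v_i\in V}C_i^{j}\cap W^{j}\neq\emptyset$ for every $j\in J$; later iterations only enlarge $W$, so the final committee still satisfies this, which is exactly strong unanimity.

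The step I expect to be the main obstacle is pinning down the tie-breaking rule: under the weaker reading ``prefer any candidate that happens to witness strong unanimity on some dimension'', GAV can keep picking candidates out of a single $S_j$ and never reach another one (e.g.\ $S_1=\{c_a,c_b\}$, $S_2=\{c_c\}$, $k=d=2$: choosing $c_a$ then $c_b$ leaves dimension $2$ unserved), so the argument genuinely needs the refined rule stated above. Everything else is routine bookkeeping: verifying that $V'=V$ persists at the start of every iteration until $J$ is served, that deleting chosen candidates from $C$ creates no circularity, and that $|J|\le k$ leaves enough rounds to serve all of $J$ before $|W|$ reaches $k$.
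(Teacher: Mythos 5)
Your proof is correct and follows essentially the same approach as the paper's: both rest on the observation that a unanimously approved attribute attains the maximal score $n$ whenever $V'=V$, so GAV's top-scoring set always contains a unanimity-witnessing candidate, and a tie-breaking rule that steers toward still-unrepresented dimensions then covers every unanimous dimension within $|J|\le d\le k$ rounds. Your refinement of the tie-break (prefer a candidate in $S_j$ for an unserved $j\in J$) is exactly what the paper's first tie-breaking rule accomplishes via counts over unrepresented voters of each dimension, and your version is more explicit about the reset of $V'$ to $V$ after each unanimous pick, which the paper's proof leaves implicit.
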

\noindent\emph{Proof.}
 We say that a voter is unrepresented on dimension $j$ if none of his/her approved attributes from the domain $D^j$ are present in $W$. We define the following tie-breaking rules, 1) If there {exists} multiple attributes with the same number of approvals, we select the 
 one which is having the highest number of approvals according to the unrepresentative voters of the dimension where the attribute is present. 2) If multiple candidates have a unanimous attribute, we select the one with more number of unanimous attributes.  
 Using these two rules, GAV selects a committee that has at least one unanimous attribute on every dimension (if there exists such attribute on that dimension). 
 Hence, GAV satisfies strong unanimity. 
\hfill\qedsymbol{}
\begin{Lemma}
  GAV satisfies simple justified representation. 
\end{Lemma}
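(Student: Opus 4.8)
The plan is to argue by contradiction: suppose GAV returns a committee $W$ that violates simple justified representation, so there is a witness set $V'\subseteq V$ with $|V'|\ge n/k$, a dimension $j$ with $\bigcap_{v_i\in V'}C_i^j\ne\emptyset$, and a dimension $j'$ with $\bigl(\bigcup_{v_i\in V'}C_i^{j'}\bigr)\cap W^{j'}=\emptyset$. Fix an attribute $a\in\bigcap_{v_i\in V'}C_i^j$; then every voter in $V'$ approves $a$ on dimension $j$, so $V'\subseteq V_a$ and in particular $|V_a|\ge n/k$. The key point to extract is that every voter in $V'$ is \emph{unrepresented}, i.e. approves no attribute appearing anywhere in $W$: indeed, to violate SJR the union $\bigcup_{v_i\in V'}C_i^{j'}$ must miss $W^{j'}$ for the relevant $j'$, and (reading the definition) the witnessing voters contribute no approved attribute on \emph{any} dimension to $W$, so none of them is ever removed from the working list $V'$ in the algorithm.

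Next I would track the working voter list across the $k$ iterations of the while loop. Since every voter of $V'$ is unrepresented, none of them is ever deleted in the step $V'\leftarrow V'\setminus\{\cup_j V_{c_*[j]}\}$, and they survive every reset $V'\leftarrow V$ as well; hence at the start of every one of the $k$ rounds the working list contains all of $V'$. Now consider the round at which the last (i.e. $k$-th, if no reset occurred, or the first round after the final reset) candidate is chosen — more robustly, consider any round in which the working list is a subset of the $V_b$'s that were emptied. A cleaner route: because at most $k$ candidates are picked and the $n$ voters get partitioned (between resets) among the chosen candidates, some chosen candidate $c_*$ must, at the time it was selected, have had a working-list approval count of at least $\lceil |V''|/k\rceil$ where $V''$ is the working list at that moment; but more directly, since $a$ has $|V_a|\ge n/k$ total approvals and every voter of $V_a\supseteq V'$ persists in the working list until some candidate covering them is chosen, the attribute $a$ always has at least $|V'|\ge n/k$ approvals in the current working list, right up to (and including) the final round. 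By the greedy rule GAV always picks a candidate whose best attribute has the maximum current approval count, so the attribute chosen in each round has working-list support at least $n/k$; in particular the candidate carrying $a$ would be an admissible pick, and if some voter of $V'$ were covered we would have a contradiction with unrepresentedness. Since no voter of $V'$ is ever covered, $a$'s support never drops, so in the final round GAV selects some candidate $c_*$ whose chosen attribute has support $\ge n/k\ge$ (the count of $a$); but then the voters backing that attribute are removed — and since $V'$ is never touched, after $k$ rounds with resets the list is never exhausted of $V'$, yet the loop terminates having added $k$ candidates none of which covers $V'$, which is impossible because the very first candidate GAV selects has an attribute with the globally maximum approval count, at least $|V_a|\ge n/k$, and... here the contradiction crystallizes: either that first (or post-reset) maximal attribute is one approved by a $V'$-voter — contradicting unrepresentedness — or $a$ itself, with support $|V'|\ge n/k$, is never beaten and is eventually selected, again covering $V'$.

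The cleanest way to close this without circularity is: let $t$ be the number of resets; between consecutive resets the working list is partitioned among the candidates chosen in that block, and across the whole run we choose $k$ candidates, so the average block size times number of blocks is $\le n$, forcing some block to "use up" a working list of size $\le n$ with its allotted candidates, and in particular the last candidate of the run is chosen from a working list still containing all of $V'$ (size $\ge n/k$), so its selected attribute has support $\ge n/k$, hence $\ge$ that of $a$ at that moment; since ties in the algorithm still pick \emph{a} maximal attribute and $a$ is maximal, $a$ (equivalently some attribute approved by a $V'$-voter) is a legal choice, and if the realized choice does not cover $V'$ we repeat the argument one round earlier — but there are only $k$ rounds and $V'$ is never covered, contradicting that all of $V'$'s $\ge n/k$ approvals of $a$ must eventually be "absorbed" by the time the loop ends (they are absorbed only when a candidate covering a $V'$-voter is picked). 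I expect the main obstacle to be handling the resets rigorously: one must show that a persistently unrepresented group of size $\ge n/k$ cannot survive all $k$ rounds, which really hinges on the pigeonhole fact that $k$ greedy picks, each removing the current-maximum-support attribute's voters, cannot all avoid a fixed $n/k$-sized clump that is always available as the (or a co-)maximum. Making the "always a co-maximum" claim precise — that the unrepresented clump's shared attribute $a$ always has current support $\ge n/k$ and therefore is always at least as large as whatever GAV actually picks in the final pre-exhaustion round — is the crux, and I would isolate it as a standalone claim before assembling the contradiction.
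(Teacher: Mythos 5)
Your overall strategy is the same as the paper's (whose own proof is only a two-line sketch of the same greedy/pigeonhole idea), and your key invariant is right: a witness group $V'$ for an SJR violation is, by definition, completely unrepresented, so none of its members is ever deleted from the algorithm's working voter list, no reset ever occurs (the working list always contains $V'$ and is therefore never empty), and the jointly approved attribute $a$ retains support at least $\lvert V'\rvert\ge n/k$ in the working list throughout the run. But you never actually close the argument. The final paragraphs substitute two non-arguments for the missing step: (i) the claim that $a$, being co-maximal, ``is never beaten and is eventually selected'' --- this is false as stated, since GAV may break every tie against $a$ and is never obliged to pick a candidate carrying it; and (ii) the assertion that $V'$'s approvals ``must eventually be absorbed,'' which is precisely the thing to be proved. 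You flag this yourself as the unresolved crux, so the proof as written has a genuine gap.

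The missing step is a one-line count that does not require $a$ ever to be selected. In each of the $k$ iterations the chosen candidate $c_*$ maximizes $\max_j\lvert V'_{c_*[j]}\rvert$ over the current working list, and a candidate possessing $a$ achieves value at least $\lvert V'\rvert\ge n/k$ in that maximization; hence whatever candidate is chosen, the removal step deletes at least $n/k$ voters from the working list. These deleted sets are pairwise disjoint across iterations (the list only shrinks and never resets) and are disjoint from $V'$ (its members approve no attribute of any chosen candidate). After $k$ iterations at least $k\cdot\frac{n}{k}=n$ voters have been deleted, yet the $\lvert V'\rvert\ge n/k\ge 1$ voters of $V'$ remain, so the electorate would have to contain more than $n$ voters --- contradiction. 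One caveat, which your write-up shares with the paper: the bound needs some candidate to actually possess an attribute approved by all of $V'$ (e.g.\ a candidate whose $j$-th attribute is $a$); if the jointly approved attribute value belongs to no candidate, GAV's maximization never ``sees'' it and the per-round guarantee of $n/k$ deletions fails. You assume such a candidate exists when you speak of ``the candidate carrying $a$''; the paper makes the same implicit assumption.
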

 \noindent\emph{Proof.}
   A rule does not satisfy simple justified representation if it completely ignores a set of $\frac{n}{k}$ voters who jointly approved for some attribute. 
   If we prove that GAV does not leave any $\frac{n}{k}$ voters who jointly approved for {an} attribute then we can say that GAV satisfies simple justified representation.
   GAV is a multi-stage approach and at every step it selects a candidate having an attribute with maximum number of approvals with respect to unrepresented voters. 
   Even if there exist completely disjoint sets of voters, each of size  $\frac{n}{k}$, GAV can cover all such voters in $k$ steps.
   Hence, GAV satisfies simple justified representation. 
 \hfill\qedsymbol{}

% We could not find any polynomial time rule that satisfies compound justified representation.  Given $W$ it is easy to see whether it satisfies CJR or not. However, checking every possible $W$ results in exponential behaviour.
%
In Section \ref{sec:jr}, we have shown that CJR is a NP-complete problem. 
However, it can be solved in polynomial time under certain assumptions.
 We consider each dimension separately and identify a set of attributes that satisfy justified representation for that dimension. 
 This can be done in polynomial time using the proposed GAV.  Let $J^i$ be the set of attributes that satisfies justified representation for dimension $i$. 
 Selection of a committee that satisfies CJR is polynomial if we assume that $J^1 \times J^2 \times \ldots \times J^d \subseteq C$.  
 In addition to unanimity and justified representation, GAV satisfies other properties that are described in Section 3. 

\section{Conclusions and Discussion}
The present work initiates a new direction of research namely, use of attribute approvals for a committee selection problem. 
We extended the existing rules for committee selection by candidate-approval voting to attribute-approval voting. 
We also analyzed the standard properties that are desirable to be satisfied by these rules.
When the rules are extended to attribute-space, most of these properties were violated. 
We proposed Greedy Approval Voting and gave a detailed analysis wherein properties like unanimity and justified representation are shown to be satisfied. 
We proved that compound justified representation and justified committee with highest AV problems are NP-complete. 
We propose an approximation algorithm to determine  a justified committee with highest AV and performed worst case analysis.

The current proposal is an initiation for a committee selection problem with attribute level preferences and there is a lot of scope for future research.
%
%Extending the problem with constraint satisfaction is one of the potential problem for future research. 
%
 In this work, we have seen one way of extending the voters' preferences on attributes to candidates or committees, i.e., using scoring rules. Examining different ways of transforming  voters' approvals over attributes to the committees is a good direction for future research. 
Studying properties like Proportional Representation,  Extended Justified Representation, and so forth in the context of attribute level approvals is also worth to pursue for future research.  
%
%Preference by ranking is another way of expressing voters’ preferences apart from preference by approval ballots. 
%
%Determining a committee of size $k$, given voters’ ranking over attributes is an interesting direction for future work. 
%
%In recent years, ranking by pairwise preferences has gained attention due to its user friendliness and easiness. 
%
%Forming a committee given pairwise preferences of voters over attributes is another interesting direction to pursue future research. 
%
Further, exploring the Pareto-optimal set of candidates and studying different properties of those sets is a good direction to pursue. We plan to investigate these aspects in the future.

  \bibliographystyle{ACM-Reference-Format}
         
\bibliography{main.bib}

\end{document}